\documentclass{article}

\usepackage[english]{babel}

\usepackage[letterpaper,top=2cm,bottom=2cm,left=3cm,right=3cm,marginparwidth=1.75cm]{geometry}

\usepackage{amsmath}
\usepackage{graphicx}
\usepackage[colorlinks=true, allcolors=blue]{hyperref}
\usepackage{amsthm}
\usepackage{tikz}
\usepackage{centernot}
\usepackage{mathtools}
\usepackage{ stmaryrd }
\usetikzlibrary{matrix,arrows.meta}
\usepackage{amsmath}
\usepackage{tikz}

\usepackage{amsmath}
\usepackage{amssymb}
\usepackage{amsfonts}
\usepackage{amsthm,amscd}
\usepackage{xcolor}
\newtheorem{theorem}{Theorem}[section]
\newtheorem{lemma}[theorem]{Lemma}

\newtheorem{proposition}[theorem]{Proposition}
\theoremstyle{definition}
\newtheorem{definition}[theorem]{Definition}

\usepackage{xcolor}
\usepackage[dvipsnames]{xcolor}
\theoremstyle{remark}
\newtheorem{remark}[theorem]{Remark}

\usetikzlibrary{matrix,arrows,decorations.pathmorphing}
\usetikzlibrary{shapes,arrows,positioning}
\usepackage{graphicx}
\usepackage{amsmath}
\usepackage{amssymb}
\usepackage{latexsym}
\usepackage{amssymb,amsmath}
\usepackage{tikz-cd}
\usetikzlibrary{arrows}
\usepackage{amsfonts}
\usepackage{amssymb}
\usepackage{graphicx}
\usepackage{fancyhdr}
\usepackage{fancyvrb}
\usepackage{fancybox}
\usepackage{float}
\usepackage{geometry}
\usepackage{minitoc}
\usepackage{indentfirst}
\usepackage{multicol, color}
\usepackage[outerbars]{changebar}
\usepackage{pifont,textcomp}
\usepackage{amsfonts}
\usepackage{amsmath}
\usepackage{amscd}
\usepackage{array}
\usepackage{multirow}
\usepackage{mathrsfs}
\usepackage{amssymb}
\usepackage{amsthm}

\usepackage{amsmath}
\usepackage{graphicx}
\usepackage[colorlinks=true, allcolors=blue]{hyperref}

\title{\bfseries\scshape{Deformed Heisenberg algebra and its Hilbert space representations}}
\title{\bfseries\scshape{Deformed Heisenberg algebra and its Hilbert space representations}}
\author{\bfseries\scshape Latévi M. Lawson\thanks{E-mail address: \tt{latevi@aims.edu.gh}}\\
Faculty of Engineering, KAAF University, Fetteh Kakraba, Central Region, Ghana,\\
 African Institute for Mathematical Sciences (AIMS), \\1 Shoppers Street, Manet, Spintex, Accra, Ghana.\\ 
\\\bfseries\scshape Ibrahim Nonkané\thanks{E-mail address: \tt{ inonkane@univouaga2.bf}}\\
Département d’Economie et de Mathematiques Appliquées, ´
\\IUFIC, Universite Thomas Sankara, Burkina faso \\
\\\bfseries\scshape Kinvi Kangni\thanks{E-mail address: \tt{ kinvi.kangni@univ-fhb.edu.ci}}\\
  UFR de Mathématiques et Informatique, Université Felix Houphouët Boigny-Abidjan\\ 22 BP 582 Abidjan, 22 Côte d’Ivoire
}
\begin{document}
\maketitle

\begin{abstract}
A deformation of   Heisenberg algebra induces among other consequences  a loss of Hermiticity of some operators that generate this algebra.  Therefore, these operators are not Hermitian, nor is the Hamiltonian operator built from them. In the present paper, we propose a position deformation of  Heisenberg algebra with both maximal
length and minimal momentum uncertainties.  By using a pseudo-similarity transformation to the non-Hermitian operators, we prove their Hermiticity with a suitable positive-definite pseudo-metric operator. We then construct Hilbert space representations associated with these pseudo-Hermitian operators. Finally, we study the eigenvalue problem of a free particle in this deformed  space and  we show that this deformation curved the  quantum levels allowing particles to jump
from one state to another with low energy transitions

\end{abstract}

\section{Introduction}
Traditional quantum mechanics relies on Hermitian
Hamiltonians to ensure real eigenvalues and unitary time
evolution with respect to the inner product, resulting in conservation of the inner product under this time evolution. However, there are exception to this rule. Hermiticity is not a necessary condition for eigenvalues to be real i.e., the reality of eigenvalues does not require that the Hamiltonian be Hermitian \cite{1, 2,3,4,5,6,7,8,9,10,11,12,13,14}. A familiar example of a matrix Hamiltonian which is not Hermitian and yet has real eigenvalues is given by
\begin{eqnarray}
    \hat H=\begin{pmatrix}
\alpha & \beta \\
0 & \gamma 
\end{pmatrix}\neq \hat H^\dag,\quad\quad \alpha,\beta,\gamma\in  \mathbb{R},
\end{eqnarray}
which is not Hermitian, but the eigenvalues of the Hamiltonian are given by $ E=\alpha,\gamma$ which are real. In this case, we note that if we define
\begin{eqnarray} \label{q21}
    S=\begin{pmatrix}
a & -\frac{\beta}{\alpha-\beta}a \\
 \frac{\beta}{\alpha-\beta}a & d 
\end{pmatrix}, \quad  S^{-1}=\frac{1}{det S}\begin{pmatrix}
d & -\frac{\beta}{\alpha-\beta}a \\
- \frac{\beta}{\alpha-\beta}a & a
\end{pmatrix},
\end{eqnarray}
where $a, d$ are arbitrary constants, we can check that
\begin{eqnarray}\label{q11}
    \hat H= S^{-1}\hat H^\dag S.
\end{eqnarray}
Any quantum mechanical system whose Hamiltonian can be related to its adjoint  through a similarity transformation as in \eqref{q11} is called a pseudo-Hermitian Hamiltonian \cite{1,2,3,4,5,6,7}.

 Our focus in this study is on non-Hermitian Hamiltonian systems with real spectra \cite{1,2,3,4,5,6,7,8,9,10,11,12,13,14,15,16,17,18,19,20} which  are consistently described by   pseudo-Hermitian quantum mechanical approaches. In this  context, we are interested in a position deformation of  Heisenberg algebra with  both maximal length and minimal momentum uncertainties \cite{1,2,3,4,5}.  As has been shown in  \cite{29}, this deformation induces a loss of Hermiticity of the
momentum operator which consequently forms with the position operator a non-Hermitian position deformed Heisenberg algebra. The Hamiltonian operator that incorporates this non-Hermitian operator is therefore not a Hermitian operator either, and the inner product is not conserved under the time-evolution.

 Consequently, Hamiltonian operator that includes this non-Hermitian operator is not a Hermitian operator as well and non-conservation of the inner product under the time-evolution.  To raise up the Hermiticity issue of this operator, we propose an appropriate positive-definite pseudo-metric operator \eqref{q21} and by means of a similarity transformation \eqref{q11},  we relate the non-Hermitian Hamiltonian   to its adjoint . We construct the position wave function and its Fourier transform that describes the corresponding Hilbert space within this deformed Heisenberg algebra.  Finally, we study the eigenvalue problem of a free particle in this deformed  space and  we show that this deformation curved the  quantum levels allowing particles to jump
from one state to another with low energy transitions.

This paper is outlined as follows: In section \eqref{sec2}, we review fundamentals of the pseudo-hermiticity  quantum mechanics. In section \eqref{sec3}, we apply the concept of pseudo-hermiticity to
the position and momentum operators that enter the quadratic position-deformed Heisenberg algebra.
In section \eqref{sec4}, we construct Hilbert space representations associated with this deformed algebra. Section \eqref{sec5} provides  an application of an eigenvalue problem to a Hamiltonian of a free particle in this deformed Heisenberg algebra.
 In the last section \eqref{sec6}, we present our conclusion.

\section{Pseudo-Hermiticity in quantum mechanics}\label{sec2}

\begin{definition} {\it   Let $\mathcal{H}$ be a finite dimensional Hilbert space  with the standard scalar product  $ \langle . |. \rangle.$ A  non-Hermitian Hamiltonian $\hat H:\mathcal{H}\rightarrow \mathcal{H}$  is
	said to be  pseudo-Hermitian \cite{9}, if there exists an automorphism ${S}:\mathcal{H}\rightarrow \mathcal{H}$ i.e.  an invertible, Hermitian, linear operator  satisfying
	\begin{eqnarray}\label{aaa1223}
		\hat H^\dag = {S}\hat H{S}^{-1}.
\end{eqnarray} } 
\end{definition}
Being an automorphism, its domain of definition
is the entire space, so that  by virtue of the theorem of Hellinger and Toeplitz, it is bounded.  To build a Hilbert space, one must satisfy the criterion that ${S}$ must be positive-definite denoted by  $S_+$ since it guarantees the positive-definiteness of the scalar product. 

\begin{definition}{\it A  Hilbert space $\mathcal{H}^{S_+}$ endowed  with a new  inner product $ \langle . |. \rangle_{S_+}$ in terms of the standard inner product $ \langle . |. \rangle $ is defined by
	\begin{eqnarray}\label{S11}
		\langle \psi|\phi \rangle_{S_+}:= \langle \psi| {S_+}\phi \rangle= \langle S_+^\dag\psi|\phi \rangle, \quad  \quad \psi,\phi\in \mathcal{H}.
\end{eqnarray} }
\end{definition}
For brevity we shall call the latter a pseudo-inner
product. Since the operator $S_+$ is positive-definite, one can easily show that $ \langle .|. \rangle_{S_+}$  is  positive-definite, non-degenerate and Hermitian \cite{30}. With the boundedness of $S_+$ one can show that $\mathcal{H}^{S_+}$   forms a  complete space \cite{9}  with the norm $ ||\phi ||_{S_+}=\sqrt{\langle \phi|\phi \rangle_{S_+}}$. In this way, the scalar product $ \langle .|. \rangle_{S_+}  $ can serve as the basis of a quantum theory.

Note that, the pseudo-Hermiticity of an operator is not sensitive to the particular form $S_+$ of the operators  satisfying $ \hat H^\dag = {S}_+\hat H{S}_+^{-1}$  but to the existence of such operators. However, for a fixed operator ${S}_+$, the linear operator  $\hat H:\mathcal{H}\rightarrow \mathcal{H}$ satisfying \cite{9,13,20,31,32}  \eqref{aaa1223} is called ${S}_+$-pseudo-Hermitian, ${S}_+$-pseudo-Hermitian operators are pseudo-Hermitian, but not every pseudo-Hermitian operator is ${S}_+$-pseudo-Hermitian. This is because, ${S}_+$ may not be defined on the entire space $\mathcal{H}$. 

Moreover, one can ensure the conservation of the conventional probability interpretation of quantum mechanics with the use of this new inner product \eqref{S11}. To do this, we shall demonstrate that, relative to this inner product, the operator Hamiltonian is Hermitian.

\begin{proposition} {\it A non-Hermitian operator $\hat H $ is Hermitian with respect to the pseudo-inner product $ \langle .|. \rangle_{{S}_+}$ if we have
	\begin{eqnarray}
		\langle \psi|\hat H\phi \rangle_{{S}_+}:= \langle  \psi|{S}_+ \hat H \phi \rangle = \langle  \psi| \hat H^\dag {{S}_+}\phi \rangle = \langle \hat H \psi|{S}_+\phi \rangle=\langle S_+^\dag\hat H \psi|\phi \rangle=\langle \hat H \psi|\phi \rangle_{{S}_+}.
\end{eqnarray}}
\end{proposition}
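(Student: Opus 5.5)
The plan is to verify the chain of equalities in the statement one link at a time, using only the definition of the pseudo-inner product \eqref{S11}, the pseudo-Hermiticity relation \eqref{aaa1223} with $S=S_+$, and the defining property of the adjoint with respect to the standard inner product $\langle\cdot|\cdot\rangle$. Throughout, $\mathcal{H}$ is finite dimensional, so every operator involved is everywhere defined and bounded, and adjoints exist with no domain subtleties.

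First, the opening equality $\langle \psi|\hat H\phi \rangle_{S_+}=\langle \psi|S_+\hat H\phi\rangle$ is simply \eqref{S11} applied to the vector $\hat H\phi$. Second, I will rewrite \eqref{aaa1223} as $\hat H^\dag S_+ = S_+\hat H$ by multiplying it on the right by $S_+$. This gives $\langle \psi|S_+\hat H\phi\rangle=\langle\psi|\hat H^\dag S_+\phi\rangle$. Third, the definition of the adjoint in the standard inner product moves $\hat H^\dag$ to the left slot, producing $\langle \hat H\psi|S_+\phi\rangle$. Fourth, the definition of $S_+^\dag$ moves $S_+$ to the left, giving $\langle S_+^\dag \hat H\psi|\phi\rangle$. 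Finally, \eqref{S11} read with $\hat H\psi$ in place of $\psi$ identifies this last expression with $\langle \hat H\psi|\phi\rangle_{S_+}$. The resulting identity $\langle \psi|\hat H\phi\rangle_{S_+}=\langle \hat H\psi|\phi\rangle_{S_+}$ for all $\psi,\phi\in\mathcal{H}$ is exactly the statement that $\hat H$ is Hermitian with respect to $\langle\cdot|\cdot\rangle_{S_+}$.

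The only point requiring care is the claim that $\langle\cdot|\cdot\rangle_{S_+}$ is a genuine inner product, so that "Hermitian with respect to it" is meaningful. For this I will invoke the remarks after \eqref{S11}: $S_+$ is Hermitian, so $S_+^\dag=S_+$, and it is positive-definite. These two facts make the form Hermitian, positive-definite and non-degenerate. Hermiticity of $S_+$ also makes the last two expressions in the chain consistent with each other.

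I expect no real obstacle. The only subtlety would arise in infinite dimensions, where one would need $\hat H$ and $S_+$ to preserve suitable domains. This is excluded here by the finite-dimensionality assumption together with the Hellinger–Toeplitz boundedness of $S_+$ noted in the paper.
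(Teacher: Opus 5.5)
Your proof is correct and follows the paper's own argument. The paper's justification is the displayed chain of equalities itself, and you justify each link in turn: the definition \eqref{S11}, the relation $\hat H^\dag S_+=S_+\hat H$ obtained from \eqref{aaa1223}, and moving $\hat H^\dag$ and then $S_+$ across the standard inner product.
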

Operators, such as  $\hat H$, which are Hermitian under the  pseudo-inner product $ \langle .|. \rangle_{{S}_+}$ are called ${S}_+$-pseudo-Hermitian operators \cite{9,32}.

 \begin{lemma} 
 	Since the  Hamiltonian is Hermitian with respect to the inner product $  \langle .|. \rangle_{S_+}$,   this will result in conservation of probability under time evolution 
 \begin{eqnarray}
 	\langle \psi(t)|\phi(t)\rangle_{{S}_+}&=&	\langle \psi(t)|S|\phi(t)\rangle=\langle \psi(0)| e^{\frac{i}{\hbar}t\hat H^\dag} {{S}_+}e^{-\frac{i}{\hbar}t\hat H} |\phi(0)\rangle \cr
 	&=& \langle \psi(0)|{{S}_+}\left({{S}_+}^{-1} e^{\frac{i}{\hbar}t\hat H^\dag}{{S}_+}\right) e^{-\frac{i}{\hbar}t\hat H} |\phi(0)\rangle\cr
 	&=& \langle \psi(0)|{{S}_+} e^{\frac{i}{\hbar}t\hat H} e^{-\frac{i}{\hbar}t\hat H} |\phi(0)\rangle\cr
 	&=& 	\langle \psi(0)|\phi(0)\rangle_{{S}_+}.
 \end{eqnarray}
 \end{lemma}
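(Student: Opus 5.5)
We will prove that the pseudo-inner product of two evolved states does not depend on $t$. The whole argument rests on one operator identity: $S_+^{-1}e^{\frac{i}{\hbar}t\hat H^\dag}S_+=e^{\frac{i}{\hbar}t\hat H}$.

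\emph{Setup.} We work in the finite-dimensional setting of the definition of pseudo-Hermiticity, so every operator is bounded and every exponential is a convergent power series. The states evolve by $|\phi(t)\rangle=e^{-\frac{i}{\hbar}t\hat H}|\phi(0)\rangle$, and similarly for $\psi$. Taking adjoints in the standard inner product gives $\langle\psi(t)|=\langle\psi(0)|e^{\frac{i}{\hbar}t\hat H^\dag}$. By the definition \eqref{S11} of $\langle\cdot|\cdot\rangle_{S_+}$ we therefore have
\begin{equation*}
\langle\psi(t)|\phi(t)\rangle_{S_+}=\langle\psi(0)|e^{\frac{i}{\hbar}t\hat H^\dag}S_+e^{-\frac{i}{\hbar}t\hat H}|\phi(0)\rangle .
\end{equation*}

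\emph{Key step: the conjugation identity.} From \eqref{aaa1223} with $S=S_+$ we get $S_+^{-1}\hat H^\dag S_+=\hat H$. Induction on $n$ then gives $S_+^{-1}(\hat H^\dag)^nS_+=\hat H^n$, because each inserted product $S_+S_+^{-1}$ cancels. Summing the exponential series term by term is legitimate, since conjugation by the bounded invertible $S_+$ is continuous. This yields
\begin{equation*}
S_+^{-1}e^{\frac{i}{\hbar}t\hat H^\dag}S_+=e^{\frac{i}{\hbar}t\hat H}.
\end{equation*}

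\emph{Conclusion.} Write $e^{\frac{i}{\hbar}t\hat H^\dag}S_+=S_+\bigl(S_+^{-1}e^{\frac{i}{\hbar}t\hat H^\dag}S_+\bigr)$ and apply the identity. The right-hand side of the setup becomes
\begin{equation*}
\langle\psi(0)|S_+e^{\frac{i}{\hbar}t\hat H}e^{-\frac{i}{\hbar}t\hat H}|\phi(0)\rangle .
\end{equation*}
The two exponentials commute, since they are functions of the same operator, so their product is the identity. What remains is $\langle\psi(0)|\phi(0)\rangle_{S_+}$, as claimed.

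\emph{Alternative check.} One can also differentiate in $t$. This gives
\begin{equation*}
\frac{d}{dt}\langle\psi(t)|\phi(t)\rangle_{S_+}=\frac{i}{\hbar}\langle\psi(t)|(\hat H^\dag S_+-S_+\hat H)\phi(t)\rangle ,
\end{equation*}
which vanishes because $\hat H^\dag S_+=S_+\hat H$. This is exactly the $S_+$-Hermiticity in the preceding Proposition.

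\emph{Main obstacle.} The only delicate point is justifying the exponential identity and the differentiation. Both are immediate in finite dimensions. In infinite dimensions, where $\hat H$ is unbounded, one would need domain conditions: $S_+$ must map $\mathrm{Dom}(\hat H)$ into $\mathrm{Dom}(\hat H^\dag)$, and $\hat H$ should be self-adjoint on $\mathcal H^{S_+}$ so that Stone's theorem gives a unitary group there. Under those conditions the argument goes through unchanged.
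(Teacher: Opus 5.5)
Your proposal is correct and follows the paper's argument step for step. You express the evolved pseudo-inner product as $\langle\psi(0)|e^{\frac{i}{\hbar}t\hat H^\dag}S_+e^{-\frac{i}{\hbar}t\hat H}|\phi(0)\rangle$, insert $S_+S_+^{-1}$, use $S_+^{-1}e^{\frac{i}{\hbar}t\hat H^\dag}S_+=e^{\frac{i}{\hbar}t\hat H}$, and cancel the exponentials. Your power-series justification of that conjugation identity and the derivative check are sound additions that the paper leaves implicit.
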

 
 As we observe, the pseudo-Hermiticity ensures  that the time evolution operator  $e^{-\frac{i}{\hbar}t \hat H}$  is unitary   with respect to this  inner product. 
 
 \section{ Position-deformed Heisenberg algebra }\label{sec3}
 
 Let $\hat x_0=\hat x_0^\dag$ and $\hat p_0=\hat p_0^\dag$ be respectively Hermitian position and momentum operators defined as follows
 \begin{eqnarray} \label{a1}
 	\hat x_0\phi(x)=x\phi(x) \quad \mbox{and}\quad  \hat p_0 \phi(x)=-i\hbar\frac{d}{dx}\phi(x).\label{3}
 \end{eqnarray}
 where $\phi(x)\in \mathcal{H}= \mathcal{L}^2\left(\mathbb{R}\right)$ is the one  infinite  dimensional (1D)    Hilbert  space.  Hermitian  operators $\hat x_0$ and  $\hat p_0$  that act in $\mathcal{H}$  satisfy the  Heisenberg algebra
 \begin{eqnarray}\label{alg1}
 	{[\hat x_0,\hat p_0 ]}=i\hbar\mathbb{I}\quad \mbox{and}\quad 	{[\hat x_0,\hat x_0 ]}=0=	{[\hat p_0,\hat p_0 ]}.
 \end{eqnarray}
 The  Heisenberg uncertainty principle reads as
 \begin{eqnarray}
 	\Delta x_0\Delta p_0\geq \frac{1}{2} \big| \large\langle\phi|[\hat x_0,\hat p_0]|\phi \large\rangle \big|\implies 
 	\Delta x_0\Delta p_0\geq\frac{\hbar}{2}.
 \end{eqnarray}

 Let    $\mathcal{H}_\tau=\mathcal{L}^2\left(\Omega_\tau\right)$ be  a finite dimensional subset of  $\mathcal{H}$ such  that $\Omega_\tau\subset \mathbb{R}$ and $\tau\in (0,1)$ is a deformation
 parameter.   Let  $\hat X$ and $\hat P$ be respectively  position and  deformed momentum operators defined in $\mathcal{H}_\tau$ such that
 \begin{eqnarray}\label{a241}
 	\hat X=\hat x_0\quad \mbox{and}\quad \hat P= \left(\mathbb{I}-\tau \hat x_0+\tau^2\hat x_0^2\right)\hat p_0.
 \end{eqnarray}
 These  operators (\ref{a241})  form the following  position-deformed Heisenberg algebra \cite{23,25,26,27}
 \begin{eqnarray}\label{alg41}
 	[\hat X,\hat P]=i\hbar \left(\mathbb{I}-\tau \hat X+\tau^2\hat X^2\right), \quad 	{[\hat X,\hat X ]}=0=	{[\hat P,\hat P ]}.
 \end{eqnarray}
 From  the representation (\ref{a241}), it follows immediately that  the operator $\hat X$ is Hermitian while the operator $\hat P$ is  no longer Hermitian on the space $\mathcal{H}_\tau$ 
 \begin{eqnarray}\label{ra}
 	\hat X^\dag=\hat X\quad \mbox{and}\quad \hat P^\dag=\hat P-i\hbar\tau(\mathbb{I}-2\tau\hat X)\implies \hat P^\dag\neq \hat P, 
 \end{eqnarray}
 and when $\tau\rightarrow 0$, the momentum operator  $\hat P$  becomes Hermitian. The non-Hermiticity of the momentum operator $\hat P$ is induced by the deformation parameter $\tau$.
  This algebra is hence designated as a non-Hermitian position-deformed Heisenberg algebra. Furthermore, a Hamiltonian operator that includes this non-Hermitian operator  in  representation \eqref{a241}, is not a Hermitian operator as well and   nonconservation of the inner product under the time evolution
 $\langle \psi(t)|\phi(t)\rangle\neq \langle \psi(0)|\phi(0)\rangle, \quad |\psi\rangle ,|\phi\rangle \in \mathcal{H}_\tau$.

 In order to map 
 these operators (\ref{ra}) into the  pseudo-Hermitian ones,  we propose the metric operator ${S}_+$ given by 
 \begin{eqnarray}\label{b1}
 	S_+= \left(\mathbb{I}-\tau \hat X+\tau^2\hat X^2\right)^{-1}.
 \end{eqnarray}
 It is easy to see that  the operator ${S}_+   $ is positive-definite  (${S}_+ >0$), Hermitian (${S}_+={S}_+^\dag$), and invertible. Since $\mathcal{H}_\tau$  is finite dimensional,  $ S_+ $ is bounded.     The pseudo-Hermicities are obtained by means of pseudo-similarity transformation 
 \begin{eqnarray}
 	{S}_+ \hat X {S}_+^{-1}&=& \hat x_0= \hat X^\dag, \label{a21}\\
 	{S}_+ \hat P  {S}_+^{-1}&=& \hat p_0\left(\mathbb{I}-\tau \hat x_0+\tau^2\hat x_0^2\right)= \hat P^\dag.\label{a22}
 \end{eqnarray}
 Using equations (\ref{a21}) and (\ref{a22}), we obtain the pseudo-Hermiticity of the Hamiltonian $\hat H$ such that
 \begin{eqnarray}\label{her}
 	{S}_+ \hat H S_+^{-1}= \frac{1}{2m}\hat p_0\left(\mathbb{I}-\tau \hat x_0+\tau^2\hat x_0^2\right)\hat p_0\left(\mathbb{I}-\tau \hat x_0+\tau^2\hat x_0^2\right)+V(\hat x_0)= \hat H^\dag.
 \end{eqnarray}
 A  Hilbert space $\mathcal{H}_\tau^{{S}_+}$ endowed  with a new  inner product $ \langle . |. \rangle_{{S}_+}$ in terms of the standard inner product $ \langle . |. \rangle $ is defined by
 \begin{eqnarray}\label{S1}
 	\langle \psi|\phi \rangle_{{S}_+}&=& 
 	\langle \psi|{S}_+\phi \rangle= \int_{\Omega_\tau}dx \psi^*(x)\left(\frac{\phi(x)}{1-\tau  x+\tau^2 x^2}\right)=\int_{\Omega_\tau}dx \left(\frac{\psi(x)}{1-\tau  x+\tau^2 x^2}\right)^*\phi(x)\cr&=& \langle S_+^\dag \psi|\phi \rangle.
 \end{eqnarray}
 With the corresponding norm given by
 \begin{eqnarray}\label{a2}
 	|| \phi||_{S_+}= \left(\int_{\Omega_\tau}\frac{dx}{1-\tau  x+\tau^2 x^2} |\phi(x)|^2\right)^{\frac{1}{2}}<\infty.
 \end{eqnarray} 
 	With equation \eqref{S1} at hand, we can demonstrate that the  momentum operator $\hat P$ described in \eqref{a241} and the associated Hamiltonian  are Hermitian \cite{33} with regard to the pseudo-inner product $ \langle.|. \rangle_{S_+}$.
 	\begin{eqnarray}
 		\langle \psi|\hat P\phi \rangle_{{S}_+}= \langle \hat P \psi|\phi \rangle_{{S}_+}\implies \langle \psi|\hat H\phi \rangle_{{S}_+}= \langle \hat H \psi|\phi \rangle_{{S}_+}.
 \end{eqnarray}

 For a system of operators satisfying the commutation relation in (\ref{alg41}), the generalized uncertainty principle
 is defined as follows
 \begin{eqnarray}\label{leq11}
 	\Delta  X\Delta P\geq \frac{\hbar}{2}\left(1-\tau\langle \hat X\rangle_{S_+}+\tau^2\langle \hat X^2\rangle_{S_+}\right), 
 \end{eqnarray}
 where $\langle \hat X\rangle_{S_+}$ and $\langle \hat X^2\rangle_{S_+}$ are the  expectation values of the operators $\hat X$ and $\hat X^2$ respectively for any  space representations.
 Referring to \cite{34,25,35,23,27}, this equation leads to the absolute minimal uncertainty $\Delta P_{min}$ in $P$-direction  and the absolute maximal uncertainty  $\Delta X_{max}$ in $x$-direction  
 when $\langle  \hat X\rangle_{S_+}=0$  such that
 \begin{eqnarray}\label{q2}
 	\quad \Delta X_{max}=\frac{1}{\tau}=\ell_{max}\quad  \mbox{and}\quad
 	\Delta P_{min}=\hbar\tau.
 \end{eqnarray}
 This provides the  scale for the maximum length and minimum momentum obtained in \cite{34,25,35,23,27}  which are
 different from the  condition imposed in  \cite{33}. As we shall see,  in contrast to the earlier conclusion in \cite{33}, the use of these uncertainty values in the current study has no impact on the physical interpretation.

 \section{Hilbert space representations}\label{sec4}
 Let $\mathcal{H}_\tau=\mathcal{L}^2(\Omega_\tau)=\mathcal{L}^2(-\ell_{max},+\ell_{max})\subset \mathcal{H}$ be the Hilbert space. We construct in this section the position space representation and its Fourier transform representation that  describe simultaneously the maximal length  and the minimal momentum uncertainties.

 \subsection{Position space representations }

 {\bf Definition 3:}  {\it Let us consider $ \mathcal{H}_\tau=\mathcal{L}^2\left(-\ell_{max},+\ell_{max}\right)$.
 	The actions of pseudo-Hermitian operators $\hat X,\hat P$  in  $\mathcal{H}_\tau$ read as follows
 	\begin{eqnarray}\label{r1}
 		\hat X\phi(x)=x \phi(x)\quad \mbox{and}\quad	\hat P\phi(x)=-i\hbar D_x\phi(x),
 	\end{eqnarray}
 	where $\phi(x)\in \mathcal{H}_\tau$ and $D_x=\left(1-\tau  x+\tau^2 x^2\right)\partial_x$ is the position-deformed derivation. Obviously, for $\tau\rightarrow 0$, we recover the ordinary  derivation. }\\
  To construct a Hilbert
 space representation that describes the maximal length uncertainty  and the minimal momentum uncertainty (\ref{q2}), one has to solve the following eigenvalue problem
 on the position space
 \begin{eqnarray}\label{a20}
 	-i\hbar D_x \phi_\xi(x)=\xi\phi_\xi(x),\quad\quad \xi\in\mathbb{R}.
 \end{eqnarray}\label{a7}
The solution of this  equation is given by 
\begin{eqnarray}\label{fzeta}
	\phi_\xi (x)= A\exp\left(i\frac{2\xi}{\tau\hbar \sqrt{3}}\left[\arctan\left(\frac{2\tau x-1}{\sqrt{3}}\right)
	+\frac{\pi}{6}\right]\right),
\end{eqnarray}
where $A$ is an abritrary constant.
Then by normalization, $\langle \phi_\xi|\phi_\xi\rangle=1$, we have 
\begin{eqnarray}\label{nzeta}
	A= \left(\int_{-l_{max}}^{l_{max}}\frac{dx}{1-\tau x+\tau^2x^2}\right)^{-\frac{1}{2}}  =  \sqrt{\frac{\tau\sqrt{3}}{\pi}} \label{a8}.
\end{eqnarray}
Substituting this equation (\ref{nzeta}) into the equation (\ref{fzeta}) gives 
\begin{eqnarray} \label{nor}
	\phi_\xi (x) &=& \sqrt{\frac{\tau\sqrt{3}}{\pi}}  \exp\left(i\frac{2\xi}{\tau\hbar \sqrt{3}}\left[\arctan\left(\frac{2\tau x-1}{\sqrt{3}}\right)
	+\frac{\pi}{6}\right]\right).
\end{eqnarray}
This  wave function describes simultaneously the maximal length  and the minimal momentum uncertainties.  As one  can see, the eigenvectors $|\phi_\xi\rangle$ are  physical states. This because, they are square integrable wavefunction such that
\begin{eqnarray}\label{az1}
	||\phi_\xi||_{S_+}^2=  \frac{\tau\sqrt{3}}{\pi}    \int_{-\ell_{max}}^{+\ell_{max}} \frac{dx}{1-\tau x+\tau^2 x^2}<\infty
\end{eqnarray}
and  the  expectation values of position energy operator  $\hat X^n$ ($n\in \mathbb{N}$) are finite.
\begin{eqnarray}
	\langle \phi_\xi| \hat X^n| \phi_\xi\rangle_{S_+}=   \int_{-l_{max}}^{+l_{max}}\frac{x^n dx}{1-\tau x+\tau^2 x^2}<\infty.
\end{eqnarray}
One can also show that these states are non-orthogonal such as
\begin{eqnarray}
	\langle \phi_{\xi'}|\phi_{\xi}\rangle_{S_+}&=& \frac{\tau\sqrt{3}}{\pi}\int_{-\ell_{max}}^{+\ell_{max}}\frac{dx}{1-\tau x+\tau^2 x^2}\exp\left(i\frac{2(\xi-\xi')}{\tau\hbar \sqrt{3}}\left[\arctan\left(\frac{2\tau x-1}{\sqrt{3}}\right)+\frac{\pi}{6}\right]\right)\cr
	&=& \frac{\tau\hbar\sqrt{3}}{\pi(\xi-\xi')}\sin\left(\pi\frac{\xi-\xi'}{\tau\hbar \sqrt{3}}\right).\label{za}
\end{eqnarray}
This relation shows that, the normalized eigenstates \eqref{fzeta} are no longer orthogonal. In addition of the latter  physically properties of the states  $|\phi_\xi\rangle$, on can show that these exhibit a Gaussian distribution. So, if one tends $(\xi - \xi')\rightarrow \infty $, these states become orthogonal
\begin{eqnarray}\label{a10}
	\lim_{(\xi - \xi')\rightarrow \infty} \langle \phi_{\xi'}|\phi_{\xi}\rangle_{S_+}=0.
\end{eqnarray} 
However, for  $(\xi - \xi')\rightarrow 0,$ we have 
\begin{eqnarray}\label{aa11}
	\lim_{(\xi - \xi')\rightarrow 0} \langle \phi_{\xi'}|\phi_{\xi}\rangle_{S_+}=1.
\end{eqnarray}
These properties show that, the states $|\phi_{\xi}\rangle$ are essentially Gaussians centered at $(\xi-\xi')\rightarrow 0 $

\subsection{ Fourier transform and its inverse 	representations }
Since the states $|\phi_\xi\rangle$ are  physically meaningful and are well localized,  one can determine its Fourier transform (FT) and its inverse   representations  by projecting  an arbitrary state $|\psi\rangle$.\\
\begin{definition} {\it Let $ \mathcal{S}\left(\mathbb{R}\right)$ be the Schwarz space which is dense in $\mathcal{H}=\mathcal{L}^2\left(\mathbb{R}\right)$. Let $|\psi\rangle \in\mathcal{S}\left(\mathbb{R}\right)$, the FT denoted by $ \mathcal{F}_\tau[\psi]$ or $\psi (\xi)$  is given by
	\begin{eqnarray}
		\psi (\xi)=\mathcal{F}_\tau[\psi(x)](\xi)
		=\sqrt{\frac{\tau\sqrt{3}}{\pi}}  \int_{-\ell_{max}}^{+\ell_{max}}\frac{dx\psi(x)}{1-\tau  x+\tau^2 x^2} 
		e^{-i\frac{2\xi}{\tau \hbar \sqrt{3}}\left[\arctan\left(\frac{2\tau x-1}{\sqrt{3}}\right)
			+\frac{\pi}{6}\right]}.\label{moment1}
	\end{eqnarray}
	The inverse  FT is given by 
	\begin{eqnarray}
		\psi(x)=\mathcal{F}_\tau^{-1}[\psi(\xi)](x)=\frac{1}{\hbar\sqrt{4\pi\tau\sqrt{3}}} \int_{-\infty}^{+\infty} d\xi \psi(\xi)  e^{i\frac{2\xi}{\tau \hbar \sqrt{3}}\left[\arctan\left(\frac{2\tau x-1}{\sqrt{3}}\right)
			+\frac{\pi}{6}\right]}.\label{F1}
\end{eqnarray}}
\end{definition}

\begin{proposition} Let   $|\psi\rangle, |\upsilon \rangle \in \mathcal{S}\left(\mathbb{R}\right),$  based on  the  definition of FT we have the following  properties
\begin{align}
	&\hfill\text{i) }&  	 \mathcal{F}_\tau[\alpha\psi(x)+\beta \upsilon(x)](\xi)&=  \alpha\psi(\xi)+  \beta\upsilon(\xi),\quad \alpha,\beta\in\mathbb{C},\\
	&\hfill\text{ii) }& \frac{1}{2\hbar \tau \sqrt{3}} \int_{-\infty}^{+\infty}	 |\mathcal{F}_\tau[\psi(x)](\xi)|^2 d\xi&=  \int_{-\ell_{max}}^{+\ell_{max}}	\frac{ |\psi(x)|^2}{1-\tau x+\tau^2 x^2} dx,\\
		&\hfill\text{iii) }&		\frac{d}{d\xi}\mathcal{F}_\tau[\psi(x)](\xi)  &= -i\frac{2}{\tau \hbar \sqrt{3}}\left[\arctan\left(\frac{2\tau x-1}{\sqrt{3}}\right)
	+\frac{\pi}{6}\right]\mathcal{F}_\tau[\psi(x)](\xi) ,\label{key}\\
	&\hfill\text{iv) }&	\frac{d}{dx}\mathcal{F}_\tau^{-1}[\psi(\xi)](x) )&=i\frac{2}{\tau \hbar \sqrt{3}}\frac{\mathcal{F}_\tau^{-1}[\psi(\xi)](x) )}{1-\tau x+\tau^2 x^2}.\label{qw}
\end{align}
where the relations (i) and (ii) are respectively the linearity  and the Parseval's identity of FT. One may also deduce the convolution property of FT. For technical reasons, we arbitrary skip these aspects of the study and we hope to report elsewhere.
\end{proposition}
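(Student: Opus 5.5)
The whole proposition reduces to ordinary Fourier analysis on a bounded interval via one change of variables. I would set
\begin{eqnarray*}
\theta(x)=\frac{2}{\tau\sqrt{3}}\left[\arctan\left(\frac{2\tau x-1}{\sqrt{3}}\right)+\frac{\pi}{6}\right],
\end{eqnarray*}
which is the phase appearing in \eqref{nor}. A direct differentiation gives $\theta'(x)=\left(1-\tau x+\tau^2x^2\right)^{-1}>0$, so $\theta$ is a smooth increasing bijection of $(-\ell_{max},\ell_{max})$ onto a bounded interval $I_\tau$. Using $\arctan(1/\sqrt3)=\pi/6$ and $\arctan(-\sqrt3)=-\pi/3$, the interval is $I_\tau=\left(-\frac{\pi}{3\sqrt3\,\tau},\frac{2\pi}{3\sqrt3\,\tau}\right)$, of length $\pi/(\tau\sqrt3)$. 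With $g(\theta):=\psi(x(\theta))$, the definition \eqref{moment1} becomes
\begin{eqnarray*}
\mathcal{F}_\tau[\psi](\xi)=\sqrt{\frac{\tau\sqrt3}{\pi}}\int_{I_\tau} g(\theta)\,e^{-i\xi\theta/\hbar}\,d\theta .
\end{eqnarray*}
This is, up to constants, the standard Fourier transform of the compactly supported function $g\chi_{I_\tau}$ evaluated at frequency $\xi/\hbar$. The measure $dx/(1-\tau x+\tau^2x^2)$ of the $S_+$ inner product \eqref{S1} is exactly $d\theta$.

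For property (i), I would note that linearity is immediate from linearity of the integral. For (ii), I would apply the classical Plancherel theorem on $\mathbb{R}$ to $g\chi_{I_\tau}\in\mathcal{L}^2$. Substituting $k=\xi/\hbar$ gives $\int_{\mathbb{R}}|\int g e^{-i\xi\theta/\hbar}d\theta|^2d\xi=2\pi\hbar\int_{I_\tau}|g|^2d\theta$. Multiplying by the prefactor $\tau\sqrt3/\pi$ yields $2\hbar\tau\sqrt3\int_{I_\tau}|g|^2d\theta$. Changing back to $x$ via $d\theta=dx/(1-\tau x+\tau^2x^2)$ gives precisely the stated identity with the factor $1/(2\hbar\tau\sqrt3)$. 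Since $\psi\in\mathcal{S}(\mathbb{R})$ is bounded on the finite interval, $g$ is bounded on the bounded set $I_\tau$, so no integrability issue arises.

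For (iii), I would differentiate under the integral sign, which is justified by dominated convergence because the integrand and its $\xi$-derivative are bounded by $|g|\,|\theta|$, which is integrable on $I_\tau$. This produces $-\frac{i}{\hbar}\mathcal{F}_\tau[\hbar\theta(x)\psi(x)](\xi)$. The bracketed factor in \eqref{key} must therefore be read as acting inside the transform, i.e. as a multiplication operator in $x$. For (iv), I would differentiate the inverse transform \eqref{F1} in $x$ using the chain rule and $\theta'(x)=(1-\tau x+\tau^2x^2)^{-1}$. This gives $\frac{i}{\hbar}\,\theta'(x)\,\mathcal{F}_\tau^{-1}[\xi\psi(\xi)](x)$, i.e. $D_x\mathcal{F}^{-1}_\tau[\psi]=\frac{i}{\hbar}\mathcal{F}^{-1}_\tau[\xi\psi]$, consistent with \eqref{a20}. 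I would justify differentiation under the integral by assuming $\xi\psi(\xi)$ is integrable, e.g. by restricting to $\psi(\xi)$ of rapid decay.

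I expect the main obstacle to be making (iii) and (iv) precise as written. The factor $\xi$ (respectively the function $\theta(x)$) must be placed inside the transform, and the constants $\frac{2}{\tau\hbar\sqrt3}$ in the statement have to be reconciled with the $\frac{1}{\hbar}$ produced by the computation. A related issue is that \eqref{F1} actually inverts \eqref{moment1} only on $I_\tau$. Since $g$ lives on a bounded interval, $\mathcal{F}_\tau[\psi]$ is entire of exponential type (Paley--Wiener), and the inversion constant must be fixed by the same Plancherel normalization used in (ii). I would check the constant in \eqref{F1} against that normalization before finalizing the inverse statements.
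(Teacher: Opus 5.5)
Your proposal is correct, and in places it is more careful than the paper's own proof. Item (i) is the same trivial argument. For (ii) the paper does not change variables. It writes $|\psi(\xi)|^2=\psi(\xi)\psi^*(\xi)$, substitutes \eqref{moment1} for $\psi^*(\xi)$, and interchanges the $\xi$- and $x$-integrals. It then uses the inversion formula \eqref{F1} to replace $\int d\xi\,\psi(\xi)e^{i\xi\theta(x)/\hbar}$ by $\hbar\sqrt{4\pi\tau\sqrt3}\,\psi(x)$, which produces the factor $\sqrt{\tau\sqrt3/\pi}\cdot\hbar\sqrt{4\pi\tau\sqrt3}=2\hbar\tau\sqrt3$. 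That route takes the inversion formula as given, although the paper only posits it. It also swaps the integrals without justification, even though $\psi(\xi)$ is typically not integrable: it decays only like $1/\xi$ when $\psi(\pm\ell_{max})\neq0$. Your substitution plus the classical Plancherel theorem needs neither assumption. It also settles the worry you raise at the end: classical inversion gives the constant $\frac{1}{2\pi\hbar}\sqrt{\pi/(\tau\sqrt3)}=\frac{1}{\hbar\sqrt{4\pi\tau\sqrt3}}$, which is exactly the one in \eqref{F1}.

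For (iii) and (iv) you are right that the formulas cannot be proved as written, and the paper's proof commits exactly the errors you anticipate. In (iii) it differentiates under the integral and then moves the $x$-dependent bracket outside the $x$-integral. The meaningful statement is yours, with the bracket inside. Your expression $-\frac{i}{\hbar}\mathcal{F}_\tau[\hbar\theta\psi]$ contains a stray $\hbar$, however. The correct result is $-\frac{i}{\hbar}\mathcal{F}_\tau[\theta\psi]=\mathcal{F}_\tau\bigl[-i\tfrac{2}{\tau\hbar\sqrt3}\bigl(\arctan\tfrac{2\tau x-1}{\sqrt3}+\tfrac{\pi}{6}\bigr)\psi\bigr]$, so in (iii) the constant is right and only the placement is wrong; nothing needs reconciling there.

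In (iv) the paper makes two errors. It drops the factor $\xi$ coming from the exponent, and it replaces $\frac{d}{dx}\arctan\frac{2\tau x-1}{\sqrt3}=\frac{\tau\sqrt3}{2}\,(1-\tau x+\tau^2x^2)^{-1}$ by $(1-\tau x+\tau^2x^2)^{-1}$. The resulting statement is false, not merely imprecise. Take $\psi(\xi)=\mathcal{F}_\tau[\psi]$ with $\psi$ real, say $\psi(x)=e^{-x^2}$. Then the left side of (iv) is $\psi'(x)$, which is real, while the right side is purely imaginary and nowhere zero. Your identity $D_x\mathcal{F}_\tau^{-1}[\psi]=\frac{i}{\hbar}\mathcal{F}_\tau^{-1}[\xi\psi]$ is the correct replacement. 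It is in fact the form the paper silently uses when it proves $\hat P\psi(\xi)=\xi\psi(\xi)$.
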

\begin{proof}
See Appendix	
\end{proof}

\begin{lemma}
{\it The action of pseudo-Hermitian operators  on $\psi(\xi)$  reads as follows
	\begin{eqnarray}
		\hat X\psi(\xi)&=&\frac{2}{\tau}\frac{\tan\left(i\frac{\tau\hbar \sqrt{3}}{2}\partial_\xi\right)}{\sqrt{3}+\tan\left(i\frac{\tau\hbar \sqrt{3}}{2}\partial_\xi\right)}\psi(\xi),\\
		\hat P \psi(\xi)&=& \xi\psi(\xi),\label{qo}\\
\end{eqnarray}}
\end{lemma}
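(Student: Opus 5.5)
Work in the $\xi$-representation defined by the deformed Fourier transform \eqref{moment1} and its inverse \eqref{F1}, and compute the action of $\hat P$ and $\hat X$ on the inverse transform. Set
\begin{eqnarray*}
\theta(x)=\frac{2}{\tau\hbar\sqrt{3}}\left[\arctan\left(\frac{2\tau x-1}{\sqrt{3}}\right)+\frac{\pi}{6}\right],
\end{eqnarray*}
so the kernel is $e^{i\xi\theta(x)}$. From \eqref{a20}, $-i\hbar D_x e^{i\xi\theta(x)}=\xi e^{i\xi\theta(x)}$, because $D_x\theta=\frac{1}{\hbar}$. This follows from the identity $\frac{d}{dx}\arctan\left(\frac{2\tau x-1}{\sqrt3}\right)=\frac{\sqrt3\,\tau}{2(1-\tau x+\tau^2x^2)}$.

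\emph{Momentum.} Apply $\hat P=-i\hbar D_x$ to $\psi(x)=\mathcal{F}_\tau^{-1}[\psi(\xi)](x)$ and differentiate under the integral, which is justified for $\psi\in\mathcal S$. This gives $\hat P\psi(x)=\mathcal{F}_\tau^{-1}[\xi\psi(\xi)](x)$. Hence $\hat P$ acts by multiplication, $\hat P\psi(\xi)=\xi\psi(\xi)$. This is \eqref{qo}, and it is essentially property (iv) of the preceding Proposition restated.

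\emph{Position.} First note that multiplication by $\theta(x)$ corresponds to $i\partial_\xi$ in $\xi$-space. Indeed, $\theta(x)e^{i\xi\theta}=-i\partial_\xi e^{i\xi\theta}$, and integrating by parts in \eqref{F1} gives $\theta(x)\psi(x)=\mathcal F_\tau^{-1}[i\partial_\xi\psi(\xi)](x)$. The boundary terms vanish for Schwartz $\psi(\xi)$; this is the content of property (iii). Next, since $x$ is a function of $\theta$, $\hat X$ corresponds to that same function of $i\partial_\xi$, defined by functional calculus through the Fourier kernel. Inverting $\theta$ gives
\begin{eqnarray*}
\frac{2\tau x-1}{\sqrt3}=\tan\left(u-\frac{\pi}{6}\right),\qquad u=\frac{\tau\hbar\sqrt3}{2}\theta .
\end{eqnarray*}
Apply the tangent subtraction formula with $\tan\frac{\pi}{6}=\frac{1}{\sqrt3}$:
\begin{eqnarray*}
2\tau x-1=\sqrt3\,\frac{\tan u-\frac{1}{\sqrt3}}{1+\frac{\tan u}{\sqrt3}}=\frac{3\tan u-\sqrt3}{\sqrt3+\tan u}.
\end{eqnarray*}
Adding $1$ gives $2\tau x=\frac{4\tan u}{\sqrt3+\tan u}$, so
\begin{eqnarray*}
x=\frac{2}{\tau}\frac{\tan u}{\sqrt3+\tan u}.
\end{eqnarray*}
Replacing $\theta$ by $i\partial_\xi$ turns $u$ into $i\frac{\tau\hbar\sqrt3}{2}\partial_\xi$. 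This yields the stated formula for $\hat X\psi(\xi)$.

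\emph{Main obstacle.} The delicate step is giving rigorous meaning to $\tan(i c\,\partial_\xi)$ with $c=\frac{\tau\hbar\sqrt3}{2}$, and to the quotient, as operators. I would define them spectrally: $f(i\partial_\xi)$ is the operator conjugate via $\mathcal F_\tau$ to multiplication by $f(\theta(x))$. The range of $u$ is $(\arctan(-3/\sqrt3)+\pi/6,\arctan(1/\sqrt3)+\pi/6)=(-\pi/6,\pi/3)$ on $x\in(-\ell_{max},\ell_{max})$. On this range $\tan u$ is finite and $\sqrt3+\tan u>0$, so the symbol is bounded and the operator is well defined. A formal power-series interpretation would still need this range restriction to make sense.
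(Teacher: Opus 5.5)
Your proposal is correct and follows essentially the same route as the paper's appendix proof. For $\hat P$ you apply $-i\hbar D_x$ under the integral in \eqref{F1} and read off multiplication by $\xi$; for $\hat X$ you identify $\arctan\left(\frac{2\tau x-1}{\sqrt3}\right)+\frac{\pi}{6}$ with $i\frac{\tau\hbar\sqrt3}{2}\partial_\xi$ and invert it to get $x=\frac{2}{\tau}\frac{\tan u}{\sqrt3+\tan u}$. The differences are minor: you use the tangent subtraction formula where the paper uses the arctan addition formula (it obtains $\tan u=\frac{\sqrt3\,\tau x}{2-\tau x}$ and solves for $x$), and you add the range check $u\in(-\pi/6,\pi/3)$, which the paper omits and which is what makes $\tan u$ and $(\sqrt3+\tan u)^{-1}$ bounded symbols.
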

\begin{proof}
See Appendix	
\end{proof}
\begin{remark}
From the limit $\tau\rightarrow 0$ in the last equations, we recover the  ordinary  representations in momentum space as
\begin{eqnarray}
	\lim_{\tau\rightarrow 0}\hat X\psi(\xi)&=&i\hbar\partial_\xi\psi(\xi),\\
	\lim_{\tau\rightarrow 0}	\hat P \psi(\xi)&=&\xi\psi(\xi).
\end{eqnarray}
\end{remark}

Hereafter, with these Hilbert space representations at hand,  we study in what follows  the spectrum of a free particle in position space representations.
\section{Spectrum of a free particle }\label{sec5}
 We consider  a Hamiltonian of a free particle within the following deformed Heisenberg algebra 
  \begin{eqnarray}\label{alg4}
	[\hat X,\hat P_X]=i\hbar,\quad 	[\hat Y,\hat P_Y]=i\hbar \left(\mathbb{I}-\tau \hat Y+\tau^2\hat Y^2\right), \quad 	{[\hat X,\hat Y ]}=0=	{[\hat P_X,\hat P_Y ]},
 \end{eqnarray} 
 where the operators  $\hat X,\hat Y$ and $\hat P_X,\hat P_Y $ are  defined in $\mathcal{H}_\tau$ by
 \begin{eqnarray}\label{a24}
 	\hat X=\hat x_0, \hat Y=\hat y_0 \quad \mbox{and}\quad \hat P_X= \hat p_{x_0},\quad \hat P_Y= \left(\mathbb{I}-\tau \hat y_0+\tau^2\hat y_0^2\right)\hat p_{y_0}
 \end{eqnarray}
 The Hamiltonian of a free particle captured in a two-dimensional box of length $0<x_0<a$ and heigth $0<y_0<a$ involving the latter operators is given by      
\begin{eqnarray} \label{Hami}
	\hat H_F&=&\frac{1}{2m}\left(\hat p_{x_0}^2+ \left(\mathbb{I}-\tau \hat y_0+\tau^2\hat y_0^2\right) \hat p_{y_0}\left(\mathbb{I}-\tau \hat y_0+\tau^2\hat y_0^2\right) \hat p_{y_0}\right)\cr
	&=&\frac{1}{2m}\hat p_{x_0}^2 + \frac{1}{2m}\left[\left(\mathbb{I}-\tau \hat y_0+\tau^2\hat y_0^2\right) \hat p_{y_0} \right]^2, 
\end{eqnarray}
where the Hamiltonian of the system in $x$-direction is free from any deformation.
The time-independent Schr\"{o}dinger equation is given by
\begin{eqnarray}
E\phi(x,y)=-\frac{\hbar^2}{2m}\partial_{x_0}^2\phi(x,y)-\frac{\hbar^2}{2m}\left[\left(1-\tau  y_0+\tau^2 y_0^2\right)\partial_{y_0} \right]^2\phi(x,y)
\end{eqnarray}
As it is clearly seen, the system is decoupled and the solution to the eigenvalue equation 
is given by
\begin{eqnarray}
	\phi(x,y)= \phi(x)\phi(y), \quad E=E^x+E^y
\end{eqnarray}
where $\phi(x) $ is the wave function in the $x$-direction and $\phi(y) $  the wave function in the
$y$-direction. Since the particle is free in the $x$-direction and bounded in $0<x_0<a$, the wave function and its eigenvalues are given by \cite{36}
\begin{eqnarray}\label{eq8}
	\phi_n(x)=\sqrt{\frac{2}{a}}\sin\left(\frac{n\pi}{a}x\right), \quad E_{n}^x=n^2\frac{\pi^2\hbar^2}{2ma^2},\quad E_1^x=\frac{\pi^2\hbar^2}{2ma^2}.
\end{eqnarray}
Taking the results \eqref{eq8} as a witness, we study in what follows the influence of the
deformed parameter $\tau$ on the system.In $y$-direction, we have to solve  the following equation 
\begin{eqnarray}
-\frac{\hbar^2}{2m}\left[\left(1-\tau  y_0+\tau^2 y_0^2\right)\partial_{y_0} \right]^2\phi(y)=E^y\phi(y)\quad \mbox{with}\quad E^y>0
\end{eqnarray}
  In $y$-direction, the solution is given by
\begin{eqnarray}
	\phi_k (y)= \sqrt{\frac{\tau\sqrt{3}}{\pi}}\sin\left(\frac{2k}{\tau \sqrt{3}}\left[\arctan\left(\frac{2\tau y-1}{\sqrt{3}}\right)
	+\frac{\pi}{6}\right]\right),
\end{eqnarray}
where $k=\frac{\sqrt{2mE^y}}{\hbar}$. We suppose that, the wave function satisfies the Dirichlet condition i.e., it vanishes at the boundaries. Thus, 
the quantization follows from the boundary condition $\phi_k(a)=0$ and leads to the equation
\begin{eqnarray}
	\frac{2k_n}{\tau \sqrt{3}}\left[\arctan\left(\frac{2\tau a-1}{\sqrt{3}}\right)
	+\frac{\pi}{6}\right]=n\pi\quad \mbox{with}\quad n\in \mathbb{N}^*.
\end{eqnarray}
This equation becomes
\begin{eqnarray}\label{sp01}
	k_n= \frac{n\pi\tau \sqrt{3}}{2\left[\arctan\left(\frac{2\tau a-1}{\sqrt{3}}\right)
		+\frac{\pi}{6}\right]}\implies
	E_n^y	= \frac{3 n^2\pi^2\tau^2\hbar^2 }{8m \left[\arctan\left(\frac{2\tau a-1}{\sqrt{3}}\right)
		+\frac{\pi}{6}\right]^2}.\label{E}
\end{eqnarray}
At the limit $\tau\rightarrow 0$
, we have
\begin{eqnarray}\label{sp1}
	\lim_{\tau\rightarrow 0}E_{n}^y= E_{n}^x=n^2\frac{\pi^2\hbar^2}{2ma^2} 
\end{eqnarray}
Thus, the energy levels can be rewritten as
\begin{eqnarray}\label{sp01}
	E_n^y	= \frac{3}{4}\left[\frac{\tau a}{\arctan\left(\frac{2\tau a-1}{\sqrt{3}}\right)
		+\frac{\pi}{6}}\right]^2E_n^x\leq E_n^x
\end{eqnarray}
 This result shows that the deformation parameter $\tau$ induces weak  energy transitions in y-direction.  This
indicates that, the deformed parameter induces a more pronounced contraction of energy levels which, consequently implies the decrease of energy band structures.

 \section{Conclusion}\label{sec6}
For dynamical quantum systems to be studied, the Hamiltonian operator must be Hermitian. Thus, the Hermiticity of the Hamiltonian ensures the realism of the spectrum, the conservation of probability density, and the orthoganility of the Hamiltonian eigenbasis.  We have shown in the present work that a Hamiltonian operator with real spectrum is no longer Hermitian within a position-deformed Heisenberg algebra \eqref{alg41}. Using a pseudo-similarity transformation and a suitable positive-definite pseudo-metric operator, we have established the Hermiticity of this operator. Next, we constructed Hilbert space representations  associated with the pseudo-Hermitian operators. We  finally studied the eigenvalue problem of a free particle in this deformed  space and  we show that this deformation strongly curved the  quantum levels allowing particles to jump from one state to another with low energy transitions.


 

 \section*{Appendix: Proof of propositions and Lemmas}
 \subsection{Proof of the proposition 4.2}
 \begin{proof}
 	i) For $\alpha,\beta\in \mathbb{C}$, we have 
 	\begin{eqnarray*}
 		\mathcal{F}_\tau[\alpha\psi(x)+\beta \upsilon(x)](\xi)&=& \sqrt{\frac{\tau\sqrt{3}}{\pi}}  \int_{-\ell_{max}}^{+\ell_{max}}\frac{\alpha\psi(x)dx}{1-\tau  x+\tau^2 x^2} 
 		e^{-i\frac{2\xi}{\tau \hbar \sqrt{3}}\left[\arctan\left(\frac{2\tau x-1}{\sqrt{3}}\right)
 			+\frac{\pi}{6}\right]}\cr&&+ \sqrt{\frac{\tau\sqrt{3}}{\pi}}  \int_{-\ell_{max}}^{+\ell_{max}}\frac{\beta \upsilon(x)dx}{1-\tau  x+\tau^2 x^2} 
 		e^{-i\frac{2\xi}{\tau \hbar \sqrt{3}}\left[\arctan\left(\frac{2\tau x-1}{\sqrt{3}}\right)
 			+\frac{\pi}{6}\right]}\cr
 		&=&\alpha\sqrt{\frac{\tau\sqrt{3}}{\pi}}  \int_{-\ell_{max}}^{+\ell_{max}}\frac{\psi(x)dx}{1-\tau  x+\tau^2 x^2}
 		e^{-i\frac{2\xi}{\tau \hbar \sqrt{3}}\left[\arctan\left(\frac{2\tau x-1}{\sqrt{3}}\right)
 			+\frac{\pi}{6}\right]}\cr&&+ \beta\sqrt{\frac{\tau\sqrt{3}}{\pi}}  \int_{-\ell_{max}}^{+\ell_{max}}\frac{\upsilon(x)dx}{1-\tau  x+\tau^2 x^2} 
 		e^{-i\frac{2\xi}{\tau \hbar \sqrt{3}}\left[\arctan\left(\frac{2\tau x-1}{\sqrt{3}}\right)
 			+\frac{\pi}{6}\right]}\\
 		&=&\alpha \psi(\xi)+\beta \upsilon(\xi).
 	\end{eqnarray*}
 	ii) From the FT, we have 
 	\begin{eqnarray*}
 		\int_{-\infty}^{+\infty}	 |\mathcal{F}_\tau[\psi(x)](\xi)|^2 d\xi&=& \int_{-\infty}^{+\infty}|\psi(\xi)|^2d\xi=\int_{-\infty}^{+\infty}\psi(\xi)\psi^*(\xi)d\xi\cr
 		&=&\sqrt{\frac{\tau\sqrt{3}}{\pi}}\int_{-\infty}^{+\infty}d\xi\psi(\xi)\cr&&\times\left[  \int_{-\ell_{max}}^{+\ell_{max}}\frac{\psi^*(x)dx}{1-\tau  x+\tau^2 x^2} 
 		e^{i\frac{2\xi}{\tau \hbar \sqrt{3}}\left[\arctan\left(\frac{2\tau x-1}{\sqrt{3}}\right)
 			+\frac{\pi}{6}\right]}\right]\cr
 		&=&\sqrt{\frac{\tau\sqrt{3}}{\pi}}\int_{-\ell_{max}}^{+\ell_{max}}\frac{\psi^*(x)dx}{1-\tau  x+\tau^2 x^2}\cr&&\times\int_{-\infty}^{+\infty} \psi(\xi)d\xi
 		e^{i\frac{2\xi}{\tau \hbar \sqrt{3}}\left[\arctan\left(\frac{2\tau x-1}{\sqrt{3}}\right)
 			+\frac{\pi}{6}\right]}\cr
 		&=&2\hbar \tau \sqrt{3}\int_{-\ell_{max}}^{+\ell_{max}}\frac{\psi^*(x)\psi(x)dx}{1-\tau  x+\tau^2 x^2}\cr
 		&=&2\hbar \tau \sqrt{3}\int_{-\ell_{max}}^{+\ell_{max}} \frac{|\psi(x)|^2dx}{1-\tau  x+\tau^2 x^2} 
 	\end{eqnarray*}
iii) 
\begin{eqnarray*}
		\frac{d}{d\xi}\mathcal{F}_\tau[\psi(x)](\xi)
		&=&\left( -i\frac{2}{\tau \hbar \sqrt{3}}\left[\arctan\left(\frac{2\tau x-1}{\sqrt{3}}\right)
	+\frac{\pi}{6}\right]  \right)\cr&&\times\sqrt{\frac{\tau\sqrt{3}}{\pi}}  \int_{-\ell_{max}}^{+\ell_{max}}\frac{dx\psi(x)}{1-\tau  x+\tau^2 x^2} 
		e^{-i\frac{2\xi}{\tau \hbar \sqrt{3}}\left[\arctan\left(\frac{2\tau x-1}{\sqrt{3}}\right)
			+\frac{\pi}{6}\right]}\\
           &=& \left( -i\frac{2}{\tau \hbar \sqrt{3}}\left[\arctan\left(\frac{2\tau x-1}{\sqrt{3}}\right)
	+\frac{\pi}{6}\right]  \right)\mathcal{F}_\tau[\psi(x)](\xi)
	\end{eqnarray*}
    iv)\begin{eqnarray}
    	\frac{d}{dx}\mathcal{F}_\tau^{-1}[\psi(\xi)](x) )&=&i\frac{2}{\tau \hbar \sqrt{3}}\frac{1}{1-\tau x+\tau^2 x^2}\cr&& \times
       \frac{1}{\hbar\sqrt{4\pi\tau\sqrt{3}}} \int_{-\infty}^{+\infty} d\xi \psi(\xi)  e^{i\frac{2\xi}{\tau \hbar \sqrt{3}}\left[\arctan\left(\frac{2\tau x-1}{\sqrt{3}}\right)
			+\frac{\pi}{6}\right]} \cr
           &=& i\frac{2}{\tau \hbar \sqrt{3}}\frac{\mathcal{F}_\tau^{-1}[\psi(\xi)](x) )}{1-\tau x+\tau^2 x^2}
    \end{eqnarray}
 \end{proof}    
 
\subsection{Proof of the proposition 4.3}
\begin{proof}
	Equation (\ref{moment1}) is equivalent to 
	\begin{eqnarray*}
		i\frac{\tau\hbar \sqrt{3}}{2}\frac{d}{d\xi}= \left[\arctan\left(\frac{2\tau x-1}{\sqrt{3}}\right)
		+\frac{\pi}{6}\right]=
		\left[\arctan\left(\frac{2\tau x-1}{\sqrt{3}}\right)
		+\arctan\left(\frac{1}{\sqrt{3}}\right)\right].\label{eq}
	\end{eqnarray*} 
	From the following relation 
	\begin{eqnarray*}
		\arctan\alpha +\arctan \beta=\arctan \left(\frac{\alpha+\beta}{1-\alpha\beta}\right),\quad \mbox{with}\quad \alpha\beta<1,
	\end{eqnarray*}
	we deduce that
	\begin{eqnarray*}
		\tan \left[\arctan\left(\frac{2\tau x-1}{\sqrt{3}}\right)
		+\arctan\left(\frac{1}{\sqrt{3}}\right)\right]=\frac{\tau x \sqrt{3} }{2-\tau x}.
	\end{eqnarray*}
	Therefore, the position operator $\hat X$ is represented as follows 
	\begin{eqnarray*}
		\hat X&=&\frac{2}{\tau}\frac{\tan\left(i\frac{\tau\hbar \sqrt{3}}{2}\partial_\xi\right)}{\sqrt{3}+\tan\left(i\frac{\tau\hbar \sqrt{3}}{2}\partial_\xi\right)}\mathbb{I},\\
		X\psi(\xi)&=&\frac{2}{\tau}\frac{\tan\left(i\frac{\tau\hbar \sqrt{3}}{2}\partial_\xi\right)}{\sqrt{3}+\tan\left(i\frac{\tau\hbar \sqrt{3}}{2}\partial_\xi\right)}\psi(\xi).
	\end{eqnarray*}
	Using equation (\ref{qw}), the action of $\hat p$ on the  quasi-representation (\ref{F1}) reads as follows
	\begin{eqnarray}
		\hat P\psi(x)&=& \frac{-i\hbar}{\hbar\sqrt{4\pi\tau\sqrt{3}}} \int_{-\infty}^{+\infty}d\xi \psi(\xi) D_x\left(  e^{i\frac{2\xi}{\tau \hbar \sqrt{3}}\left[\arctan\left(\frac{2\tau x-1}{\sqrt{3}}\right)
			+\frac{\pi}{6}\right]}\right)\cr
		&=& \frac{1}{\hbar\sqrt{4\pi\tau\sqrt{3}}} \int_{-\infty}^{+\infty}d\xi \xi\psi(\xi)  e^{i\frac{2\xi}{\tau \hbar \sqrt{3}}\left[\arctan\left(\frac{2\tau x-1}{\sqrt{3}}\right)
			+\frac{\pi}{6}\right]}.\label{q1}
	\end{eqnarray}
	On the other hand, the action of $\hat P$ on the  representation (\ref{F1}) reads as follows
	\begin{eqnarray}
		\hat P\psi(x)&=& \frac{1}{\hbar\sqrt{4\pi\tau\sqrt{3}}} \int_{-\infty}^{+\infty}\hat P \psi(\xi) d\xi e^{i\frac{2\xi}{\tau \hbar \sqrt{3}}\left[\arctan\left(\frac{2\tau x-1}{\sqrt{3}}\right)
			+\frac{\pi}{6}\right]}.\label{b5}
	\end{eqnarray}
	By comparing equation (\ref{q1}) and equation (\ref{b5}), we obtain equation (\ref{qo}) of Lemma 4.2	
	\begin{eqnarray*}
		\hat P \psi(\xi)= \xi\psi(\xi).
	\end{eqnarray*}	
\end{proof}

\end{document}